\newtheorem{theorem}{Theorem}
\newtheorem{lemma}[theorem]{Lemma}
\def\QED{\ensuremath{{\square}}}
\def\markatright#1{\leavevmode\unskip\nobreak\quad\hspace*{\fill}{#1}}
\newenvironment{proof}
  {\begin{trivlist}\item[\hskip\labelsep{\bf Proof.}]}
  {\markatright{\QED}\end{trivlist}}
\begin{document}

\title{Robust Non-Parametric Data Approximation of Pointsets via Data Reduction\thanks{Work was supported in part by the Natural Science and Engineering Research Council of Canada (NSERC)}}

\author{Stephane Durocher\thanks{University of Manitoba, Winnipeg, Canada, {\tt durocher@cs.umanitoba.ca}, {\tt alex\_{}leblanc@umanitoba.ca}, {\tt jason\_{}morrison@umanitoba.ca} and {\tt mskala@cs.umanitoba.ca}} \and Alexandre Leblanc\footnotemark[2] \and Jason Morrison\footnotemark[2] \and Matthew Skala\footnotemark[2]}

\maketitle

\begin{abstract}
In this paper we present a novel non-parametric method of simplifying
piecewise linear curves and we apply this method 
as a statistical approximation of structure within sequential data in the plane.
We consider the problem of minimizing the average length of sequences of
consecutive input points that lie on any one side of the simplified curve.
Specifically, given a sequence $P$ of $n$ points in the plane that determine a 
simple polygonal chain consisting of $n-1$ segments, 
we describe algorithms for selecting an ordered subset $Q \subset P$
(including the first and last points of $P$) that determines a second
polygonal chain to approximate $P$,
such that the number of crossings between the two polygonal chains is maximized,
and the cardinality of $Q$ is minimized among 
all such maximizing subsets of $P$.
Our algorithms have respective running times 
$O(n^2\log n)$ when $P$ is monotonic and
$O(n^2\log^2 n)$ when $P$ is an arbitrary simple polyline.
Finally, we examine the application of our algorithms iteratively in a 
bootstrapping technique to define a smooth robust non-parametric approximation of the original sequence.
\end{abstract}

\section{Introduction}

Given a simple
polygonal chain $P$ (a \emph{polyline}) defined by a sequence of points
$(p_1,p_2,\ldots,p_n)$ in the plane, the \emph{polyline simplification problem} is
to produce a simplified polyline $Q=(q_1,q_2,\dots,q_k)$, where $k < n$.
The polyline $Q$ represents an approximation of $P$
that optimizes one or more objectives evaluated as functions of $P$ and $Q$.
For $P$ to be \emph{simple}, the points $p_1,\ldots,p_n$ must be
distinct and $P$ cannot intersect itself.

Motivation for studying polyline simplification comes from the
fields of computer graphics and cartography, 
where simplification is used to render
vector-based features such as
streets, rivers, or coastlines onto a screen or a map at appropriate
resolution with acceptable error \cite{agarwal:2002},
as well in problems involving computer animation, pattern matching and  
geometric hasing (see survey by Alt and Guibas for details \cite{alt:1999}).
Our present work removes the arbitrary parameter
previously required to describe acceptable error between $P$ and $Q$, and
provides a simplification method that is robust to some forms of noise.

Typical polyline simplication algorithms require that ``distance'' between
two polylines be measured using a function denoted here by $\zeta(P,Q)$. 
The specific measure of interest differs depending on the focus of the
particular problem or article; however, three measures are popular:
Chebyshev error $\zeta_C$, Hausdorff distance $\zeta_H$, and Fr\'{e}chet
distance $\zeta_F$.  In informal terms, the Chebyshev error is the maximum
absolute difference between $y$-coordinates of $P$ and $Q$ (maximum
residual); the symmetric Hausdorff distance is the distance between the most
isolated point of $P$ or $Q$ with respect to the other polyline; and the
Fr\'{e}chet distance is more complicated, being the shortest possible
maximum distance between two particles each moving forward along $P$ and
$Q$.  Alt and Guibas give more formal definitions \cite{alt:1999}.  We define
a new measure of quality or similarity, to be maximized, rather than using
an ``error'' to be minimized.  Our crossing measure is a combinatorial
description of how well $Q$ approximates $P$.  It is invariant under a
variety of geometric transformations of the polylines, and is often robust to
uncertainty in the locations of individual points.

Previous work on polyline simplification is generally divided into
four categories depending on what property
is being optimized and what restrictions
are placed on $Q$~\cite{alt:1999}. Problems can be classified as those
that either
require an approximating polyline $Q$ having the minimum number of segments 
(minimizing $|Q|$) for a
given acceptable error $\zeta(P,Q) \leq\epsilon$, or a $Q$ with minimum
error $\zeta(P,Q)$ for a given value of $|Q|$.  These are called min-$\#$
problems and min-$\epsilon$ problems respectively.  These two types of
problems are each further divided into ``restricted'' problems where the
points of $Q$ are required to be a subset of those in $P$ and to include the
first and last points of $P$ ($q_1 = p_1$ and $q_k = p_n$), 
and ``unrestricted'' problems, where the
points of $Q$ may be arbitrary points on the plane.  Under this classification,
the polyline simplification $Q$ we examine is a restricted min-$\#$ problem
for which a subset of points of $P$ is selected (including $p_1$ and $p_n$)
where the objective measure
$\zeta(P,Q)$ is the number of crossings between $P$ and $Q$ 
and an optimal simplification first
maximizes (rather than minimizing) the crossing number and then has a
minimum $|Q|$ given the maximum crossing number.  

While the restricted min-$\#$ problems find the smallest sized approximation within a 
given error $\epsilon$, an earlier approach was to find any approximation within the 
given error.  The cartographers Douglas and Peucker \cite{douglas:1973fk}
 developed a heuristic algorithm
where an initial (single segment) approximation was evaluated and the furthest point was then added to the simplification.  This technique remained inefficient until series of papers by Hershberger and Snoeyink  concluded that the problem could be solved in $O(n\log^* n)$ time and linear space \cite{hershberger:1998uq}.

The most relevant previous
literature is on restricted min-$\#$ problems. Imai and Iri~\cite{imai:1988fk} presented 
an early solution to the
restricted polyline simplification problem using $O(n^3)$ time and
$O(n)$ space.  The version they study optimizes $k=|Q|$ while maintaining
that the Hausdorff metric between $Q$ and $P$ is less than the parameter
$\epsilon$.  Their algorithm was subsequently improved by Melkman and
O'Rourke~\cite{melkman:1988uq} to $O(n^2\log n)$ time and then by Chan and
Chin~\cite{chan:1992fk} to $O(n^2)$ time.  Subsequently, Agarwal and
Varadarajan~\cite{agarwal:2000kx} changed the approach from finding a
shortest path in an explicitly constructed graph to an implicit method that
runs in $O(f(\delta)n^{\frac{4}{3}+\delta})$ time.  Agarwal and Varadarajan
used the $L_1$ Manhattan and $L_\infty$ Chebyshev metrics instead of the
previous works' Hausdorff metric.  Finally, Agarwal \emph{et al.}~study a
variety of metrics and give approximations of the min-$\#$ problem in $O(n)$
or $O(n\log n)$ time.

Our algorithm for minimizing $|Q|$ while optimizing our non-parametric
quality measure requires $O(n^2\log n)$ time when
$P$ is monotonic in $x$, or $O(n^2\log^2 n)$ time when $P$ is a
non-monotonic simple polyline on the plane, both in $O(n)$ space.
The near-quadratic times are
remarkably similar to the optimal times achieved in the parametric
version of the problem using Hausdorff distance
\cite{agarwal:2002,chan:1992fk},
suggesting the possibility that the problems may have similar complexities.

In the next section, we define the crossing measure $\chi(Q,P)$ and relate
the concepts and properties of $\chi(Q,P)$ to previous work in both
polygonal curve simplification and robust approximation.  
In Section~\ref{sect:method}, we describe our algorithms to compute
simplifications of monotonic and non-monotonic simple polylines that
maximize $\chi(Q,P)$.  Section~\ref{sect:monotonicresults} presents our
results in applying the method to $x$-monotonic polylines that model 2-D
functional (e.g., measured) data and describes the use of this
simplification method to approximate ``shape'' and ``noise'' without assuming
a parametric model for either.

\section{Crossing Measure}
\label{sect:measure}

The crossing measure $\chi(Q,P)$ is defined for a sequence of $n$ distinct
points $P=(p_1,p_2,\ldots,p_n)$ and a subsequence of $k$ distinct points $Q
\subset P, Q=(q_1,q_2,\ldots,q_k)$ with the same first and last values:
$q_1= p_1$ and $q_k = p_n$.  For each $p_i$ let $(x_i,y_i)=p_i \in
\mathbb{R}^2$.  To understand the crossing measure it is necessary to
introduce the idea of left and right sidedness of a point relative to a
directed line segment.  A point $p_j$ is on the left side of a segment
$S_{i,i+1}=[p_i,p_{i+1}]$ if the signed area of the triangle formed by the
points $p_i,p_{i+1},p_j$ is positive.  Correspondingly, $p_j$
is on the right side of the segment if the signed area is negative.  The
three points are collinear if the area is zero.

For any endpoint $q_i$ of a segment in $Q$ it is possible to determine the side
of $P$ on which $q_i$ lies. Since $Q$ is a polyline using a
subset of the points defining $P$, for every segment $S_{i,i+1}$ there
exists a corresponding segment of $S_ {\pi(j),\pi(j+1)}$ such that $\pi(j)
i<i+1\leq\pi(j+1)$.  The endpoints of $S_{\pi(j),\pi(j+1)}$ are given a side
based on $S_{i,i+1}$ and vice versa.  Two segments intersect if they share a
point.  Such a point is interior to both segments if only if both segments
change sides with respect to each other or the intersection is at an
endpoint of at least one endpoint is collinear to the other segment \cite[p. 
566]{skiena:2008uq}.  The \emph{crossing measure} $\chi(Q,P)$ is the number
of times that $Q$ changes sides from properly left
to properly right of $P$ due to an intersection between the polylines.  A
single crossing can be generated by any of five cases listed below (see Figure~\ref{fig:cases}):

\begin{enumerate}
  \item{A segment of $Q$ intersects $P$ at a point distinct from any
    endpoints;}
  \item{two consecutive segments of $P$ meet and cross at a point interior
    to a segment of $Q$;}
  \item{one or more consecutive segments of $P$ are collinear to the interior of a segment of $Q$ with the previous and following segments of $P$ on opposite sides of that segment of $Q$;}
\item{two consecutive segments of $P$ share their common point with two consecutive segments of $Q$ and form a crossing; or }
  \item{in a generalization of the previous case, instead of being a
    single point the intersection comprises one or more sequential segments
    of $P$ and possibly $Q$ that are collinear or identical.}
\end{enumerate}

\begin{figure}
\subfloat{
\includegraphics[scale=0.6]{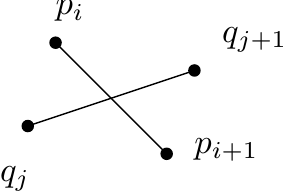}
}
\subfloat{
\includegraphics[scale=0.6]{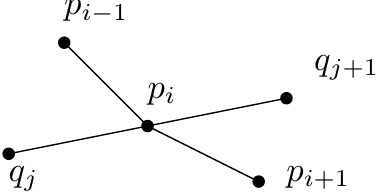}
}
\subfloat{
\includegraphics[scale=0.6]{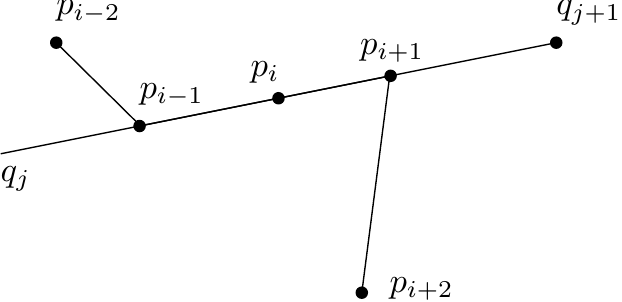}
}
\subfloat{
\includegraphics[scale=0.6]{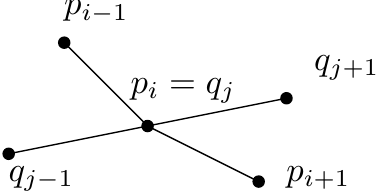}
}
\subfloat{
\includegraphics[scale=0.6]{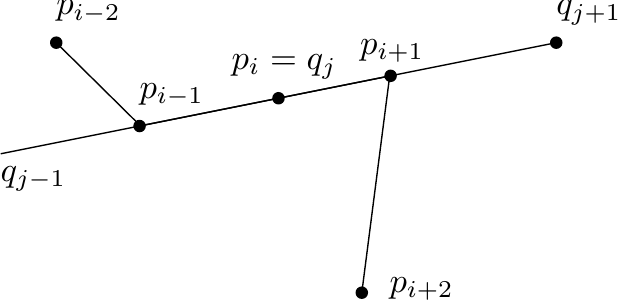}
}
\caption{Examples of the five cases generating a single crossing.}
\label{fig:cases}
\end{figure}

In Section~\ref{sect:xsingleseg} we discuss how to compute the crossings for the first three
cases, which are all cases where
crossings involve only one segment of $Q$.  The remaining cases involve more
than one segment of $Q$, because an endpoint of one segment of $Q$ or even
some entire segments of $Q$ are coincident with one or more segments of $P$;
those cases are discussed in Section~\ref{sect:xmultiplesegment}.
 	
\begin{figure}
\label{fig}
\subfloat{
\includegraphics[scale=0.6]{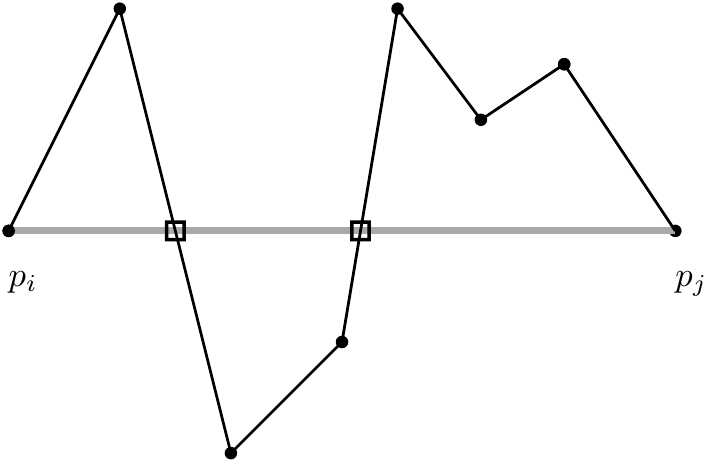}
}
\qquad
\subfloat{
\includegraphics[scale=0.6]{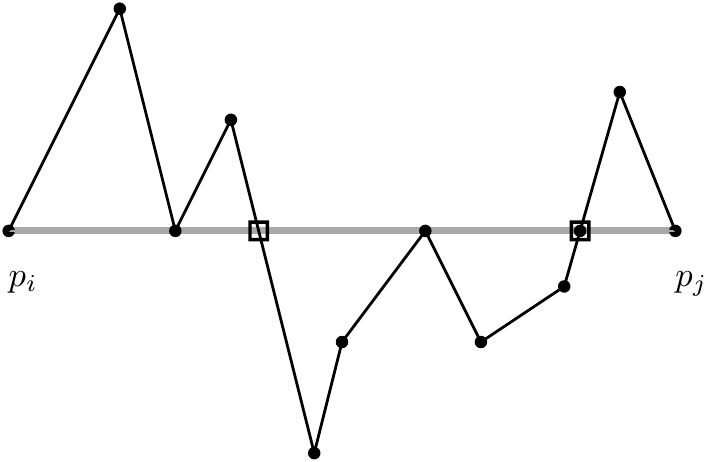}
}
\qquad
\subfloat{
\includegraphics[scale=0.6]{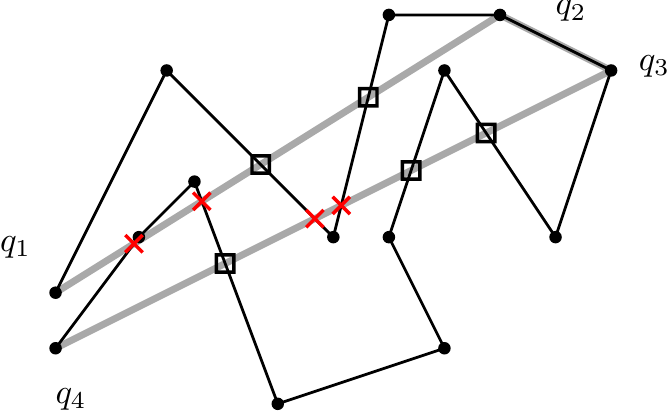}
}
\caption{Crossings are indicated with a square and false crossings are
  marked with a red x.  Crossing are only counted when a segment intersects
  the portion of the chain between its own endpoints.}
\end{figure}
	

In the case where the $x$-coordinates of $P$ are monotonic, $P$ describes a
function $Y$ of $x$ and $Q$ is an approximation $\hat{Y}$ of that function.  
The signs of the residuals $r = (r_1,r_2,\ldots,r_n)=Y_P - \hat{Y}$ 
are computed at the $x$-coordinates of $P$ and are equivalent to
the sidedness described above.  The crossing number is the
number of proper sign changes in the sequence of residuals.  The resulting
simplification maximizes the likelihood that adjacent residuals would have
different signs, while minimizing the number of original data points
retained conditional on that number of sign changes.  Note that if $r$ was
independently and identically selected at random from a distribution with
median zero, then any adjacent residuals in the sequence
$(r_1,r_2,\ldots,r_n)$ would have different signs with probability $1/2$.

\subsection{Counting Crossings With a Segment}
\label{sect:xsingleseg}

To compute a simplification $Q$ with optimal crossing number for a given $P$,
we consider the optimal numbers of crossings for segments of $P$ and combine
them in a dynamic programming algorithm.  Starting from a point $p_i$ we
compute optimal crossing numbers for each of the $n-i$ segments that start
at point $p_i$ and end at some $p_j$ with $i<j\le n$.  Computing all $n-i$
optimal crossing numbers for a given $p_i$ simultaneously in a single pass
is more efficient than computing them for each $(p_i,p_j)$ pair separately. 
These batched computations are performed for each $p_i$ and the results used
to find $Q$.

To compute a single batch we will consider the angular order of points in
$P_{i+1,n}=\{p_{i+1},\ldots,p_n\}$ with respect to $p_i$.  Let
$\rho_i(j)$ be a function on the indices representing the clockwise angular
order of points $p_j$ within this set, such that $\rho_i(j)=1$ for 
all $p_j$ having the smallest clockwise angle measured from the vertical
line passing through $p_i$, and $\rho_i(j)\le\rho_i(k)$ if and only if this
angle for $p_j$ is less than or equal to the corresponding angle for $p_k$.
See Figure~\ref{fig:angularchains}.  Using this
angular ordering we partition $P_{i+1,n}$ into chains and process the
batch of crossing number problems as discussed below.

We define a \emph{chain} with respect to $p_i$ to be a consecutive sequence
$P_{\ell,\ell '}\subset P_{i+1,n}$ with non-decreasing angular order.  That
is, either $\rho_i(l')\ge \rho_i(j+1) \ge \cdots \ge \rho_i(l)$ or
$\rho_i(l)\le \rho_i(j+1) \le \cdots \le \rho_i(l')$, with the added
constraint that chains cannot cross the vertical ray above $p_i$.  Each
segment that does cross is split into two pieces using two ``artificial''
points on the ray per crossing segment.  The points on the ``low'' segment
portions have rank $\rho_i=1$ and the identically placed other points have
rank $\rho_i=n+1$.  These points do not increase the complexities by more
than a constant factor and are not mentioned again unless specifically
required.  Processing $P_{i+1,n}$ into its chains is done by first computing
the angle from vertical for each point and storing that information with the
points.  Then the points are sorted by angular order around $p_i$ and
$\rho_i(j)$ is computed as the rank of $p(j)$ in the sorted list.  Since
this algorithm works in the real RAM model, this step can be done in
$O(n\log n)$ time with linear space to store the angles and ranks.  Creating
a list of chains is then computable in $O(n)$ time and space by storing the indices of the
beginning and end of each chain encountered while checking points $p_j$ in
increasing order from $j=i+1$ to $j=n$.  The process to identify all chains involves two steps.
First all
segments are checked to determine if they intersect with the vertical ray,
each in $O(1)$ time.  Such an intersection implies that the previous chain
should end and the segment that crosses the ray should be a new chain (note an artificial index of $i+\frac{1}{2}$ can denote the point that crosses the vertical).  The
second check is to determine if the most recent segment has a different
angular direction from the previous segment.  If so, the previous chain ``ended''
with the previous point and the new chain ``begins'' with the current segment. 
Each chain is oriented from lowest angular order to highest angular
order.

\begin{figure}
\includegraphics[scale=0.5]{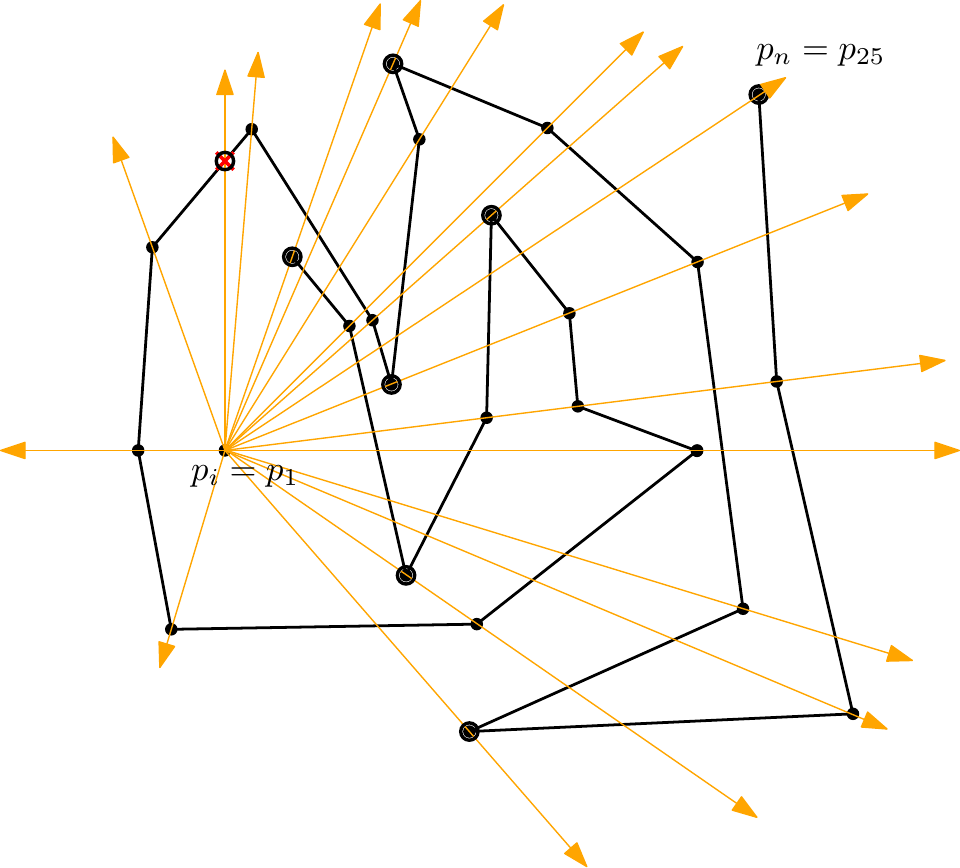}
\caption{An example of the angular order of vertices in $P_{i+1,n}$ and the
  resulting chains}
\label{fig:angularchains}
\end{figure}
	
\begin{lemma}\label{lem:segmentcross}
	
Consider any chain $P_{\ell,\ell'}$ (wlog\ assume $\ell<\ell'$). With respect
to $p_i$ the segment $S_{i,j}: (i<j\leq n)$ can have at most one crossing
strictly interior to $P_{\ell,\ell'}$.
	
\end{lemma}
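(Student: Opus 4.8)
The plan is to exploit the defining property of a chain --- that its vertices occur in monotone angular order about $p_i$ --- to show that the entire chain is \emph{radially monotone} as seen from $p_i$, and then to observe that $S_{i,j}$ lies along a single fixed direction emanating from $p_i$. A radially monotone curve can sweep past a fixed direction at most once, which immediately bounds the number of crossings by one.

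First I would set up polar coordinates centred at $p_i$, letting $\theta(\cdot)$ denote the angle of a point measured from the vertical through $p_i$ in the rotational sense consistent with $\rho_i$. Because a chain is forbidden from crossing the vertical ray above $p_i$, this angle is well defined with no wrap-around: every point of $P_{\ell,\ell'}$ lies in one contiguous angular interval, so the order induced by $\theta$ agrees with $\rho_i$ and hence (wlog, taking $\ell<\ell'$) is non-decreasing from $p_\ell$ to $p_{\ell'}$.

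The key step is to upgrade this from the vertices to the whole polyline, i.e. to show $\theta$ varies monotonically \emph{continuously} along the chain and not merely at its vertices. For a single segment $[p_a,p_{a+1}]$ of the chain, I parametrize $p(t)=p_a+t(p_{a+1}-p_a)$ for $t\in[0,1]$; the sign of $d\theta/dt$ is governed by the cross product $(p(t)-p_i)\times(p_{a+1}-p_a)$, which reduces to the constant $(p_a-p_i)\times(p_{a+1}-p_a)$ because the $t$-dependent term $(p_{a+1}-p_a)\times(p_{a+1}-p_a)$ vanishes. Thus $\theta$ is monotone along each segment, and since it runs from $\theta(p_a)$ to $\theta(p_{a+1})$ with $\theta(p_a)\le\theta(p_{a+1})$, it is non-decreasing there. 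Concatenating over the consecutive segments shows $\theta$ is non-decreasing along all of $P_{\ell,\ell'}$.

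Finally I would conclude using this monotonicity. The segment $S_{i,j}$ is contained in the ray from $p_i$ in the fixed direction $\theta(p_j)$, so any intersection of $S_{i,j}$ with the interior of $P_{\ell,\ell'}$ forces the chain to attain the angle $\theta(p_j)$; by weak monotonicity the set of chain points at that exact angle is a single point or a single collinear sub-segment, so the chain crosses the direction $\theta(p_j)$ transversally at most once, giving at most one interior crossing. The main obstacle is the careful treatment of degeneracies --- segments of the chain collinear with $S_{i,j}$, intersections landing on chain vertices, and the artificial split points introduced on the vertical ray --- which must each be checked to contribute at most the single crossing that the monotonicity argument already permits, rather than producing an additional proper left-to-right side change.
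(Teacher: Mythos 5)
Your argument is correct, and it rests on the same underlying geometric fact as the paper's proof --- a chain is angularly monotone about $p_i$ while $S_{i,j}$ lies along a single ray out of $p_i$ --- but your route to the bound is genuinely different. The paper splits into three cases according to where $\rho_i(j)$ lies relative to $[\rho_i(\ell),\rho_i(\ell')]$: coincidence with an endpoint rank (collinear, no proper crossing), outside the range (a separating ray exists, so no crossing), and strictly inside, which it handles by contradiction --- two crossings would determine a cone with apex $p_i$ through the two crossing points nearest $p_i$, trapping the sub-chain between them, and since that sub-chain cannot encircle $p_i$ it would have to attain an extremal angular rank at an interior vertex, contradicting the definition of a chain. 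You instead prove the stronger, continuous statement that the polar angle $\theta$ is weakly monotone along the \emph{entire} chain (via the constant-sign computation of $d\theta/dt$ from the cross product), and then conclude directly: the chain's level set at angle $\theta(p_j)$ is connected, hence its intersection with $S_{i,j}$ is connected and can support at most one proper side change. Your version needs no case analysis (the paper's Cases 1 and 2 are simply the situations where your level set is empty or meets only the chain's ends), and it supplies exactly the segment-level angular monotonicity that the paper's cone argument uses implicitly when it asserts that the trapped sub-chain attains its angular extremum at a vertex; in that sense your proof is more self-contained and rigorous. What the paper's formulation buys in exchange is that it is phrased purely in terms of the vertex ranks $\rho_i$, and its Case 3 condition $\rho_i(j)\in(\rho_i(\ell),\rho_i(\ell'))$ is precisely the predicate that the subsequent sweep algorithms query, so its case structure is reused by the algorithm rather than discarded. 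The one point to keep tight in your write-up is the closing degeneracy check: a connected intersection component counts as a crossing only if the chain properly changes sides of $S_{i,j}$ across it, which is the paper's definition of a crossing and is what makes a grazing touch or a collinear overlap contribute zero rather than one.
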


\begin{proof} Three cases need to be considered.
	
{\bf Case 1:} $\rho_i(\ell)=\rho_i(j)$ or $\rho_i(j)=\rho_i(\ell')$. Note
that if $\ell=n$ then no crossing can exist because at least one end (or all)
of $P_{k,\ell}$ is collinear with $S_{i,j}$ and no proper change in sidedness
can occur in this chain to generate a crossing.
		
{\bf Case 2:} $\rho_i(j) \notin [\rho_i(\ell)$, $\rho_i(\ell')]$.  These
cases have no crossings with the chain because $P_{k,\ell}$ is entirely on
one side of $S_{i,j}$.  A ray exists
between either $\rho_i(j)<\rho_i(\ell)$ or $\rho_i(m)<\rho_i(j)$ that separates
$P_{k,l}$ from $S_ {P,i,j}$ and thus no crossings can occur between the
segment and the chain.
		
{\bf Case 3:} $\rho_i(j) \in (\rho_i(\ell),\rho_i(\ell'))$.
Assume that the chain causes at least two crossings.  Pick the lowest index segment for each of the two crossings that are the fewest segments away from $p_i$.
By definition there are no crossings of segments between these two segments.
Label the point with lowest index of these two segments $p_{\lambda}$ and
the point with greatest index $p_{\lambda'}$.  Define a possibly degenerate cone $Phi$ with a base $p_i$ and rays through 
$p_{\lambda}$ and $p_{\lambda'}$.  This cone, by definition, separates the segments from $p_{\lambda + 1}$ to $p_{\lambda'-1}$ from the remainder of the chain.  Since this sub-chain cannot circle $p_i$ entirely there must exist one or more points that have a maximum (or minimum) angular index, which is a contradiction to the definition of the chain.  Hence there must be zero or one crossings only.\end{proof}
	
The algorithm for computing the crossing measure on a batch of segments
depends on the nature of $P$.  If $P$ is
$x$-monotone, then the chains can be ordered by increasing $x$-coordinates or
equivalently by the greatest index amoung the points that define them.  Then
a segment $S_{i,j}$ intersects any chain $P_{\ell,\ell'}$ exactly once if
its $x$-coordinates are less than $p_j$ and $\rho_i(j) \in
\left(\rho_i(\ell),\rho_i(\ell')\right)$ (i.e., Case 3 of
Lemma~\ref{lem:segmentcross}).  The algorithm maintains a modified segment
tree with one angular order interval per chain previously included, using
the modified segment tree described by van~Kreveld~\emph{et~al.}
\cite[p.~237]{berg:2008fk}.  This data structure requires $O(\log n)$ time
per insertion and $O(n)$ space.  Each point's crossing number is queried in
$O(\log n)$ time, with points examined in order of increasing indices. 
Once each chain's points have all been queried the chain's interval is
added.  Correctness follows from the fact that no segment considered
can have a crossing within any chain it ends, and chains that span a
point's angular order intersect once if the point is sufficiently distant
from $p_i$ relative to the chain.  These facts are guaranteed by
$x$-monotonicity and the proof of Lemma~\ref{lem:segmentcross}.
	
The problem becomes more difficult
if we assume that $P$ is simple but not necessarily monotonic in $x$.
While chains describe angular order quite
nicely, the non-monotone nature of $P$ does not allow a consistent
implicit ordering of chain boundaries.  Thus queries will be of a specific
nature: for a given point $p_j$, we must determine how many chains are
closer to $i$ and have a lower maximum index than $j$.  Note that chains do
not cross and can only intersect at their endpoints due to the
non-overlapping definition of chains and the simplicity of $P$.  Therefore,
sweeping a ray from $p_i$, initially vertical, in increasing $\rho_i$ order
defines a partial order on chains with respect to their distance from $p_i$. 
Using a topological sweep~\cite[p.~481]{skiena:2008uq} it is possible to
determine a unique order that preserves this partial ordering of chains. 
Since there are $O(n)$ chains and changes in ``neighbours'' defining the
partial order occur at chain endpoints, there are $O(n)$ edges in the partial
order and this operation requires $O(n\log n)$ time to determine the events
in a sweep and $O(n)$ time to compute the topological ordering.  Without
loss of generality assume that the chains closest to $p_i$ have a lower
topological order.

In our algorithm each chain will be labelled with two labels: the maximum
index of its defining points and the topological order.  Furthermore, each
point $p_j$ will be labelled with the topological order of the chain to which 
it belongs (or the minimum of the two if it is in two chains).  A sweep in
increasing $\rho_i$ order maintains the set of chains whose range of angular
orders properly includes the current $\rho_i(j)$.  Thus to query the number
of crossings of $S_{i,j}$ we need to determine from the current set of
chains the number of them whose topological order is strictly less than the
chain or chains containing $p_j$ and whose maximum index is less than $j$. 
Querying the set in this way is an orthogonal range counting query in
$\mathbb{R}^2$, and such queries can be performed in $O(\log^2n)$ time and
$O(n)$ space with insertion and deletion of chains in $O(\log^2n)$ time per
event on an elementary pointer machine~\cite{chazelle:1988vn}.  The order of
operations is as follows: first, build the range counting structure by
inserting all chains that begin at the vertical ray from $p_i$; next, for
each unique angular order delete all segments whose maximum $\rho_i$ order
is achieved (this maintains the proper intersection of $\rho_i$ previously
mentioned); compute the crossing number of all points with this angular
order by querying the data structure; finally, add any new chains starting
at this $\rho_i$.  The artificial points on the vertical are not queried. 
Correctness follows from Lemma~\ref{lem:segmentcross} and the
previous discussion.

\subsection{Crossings Due to Neighbouring Simplification Segments}
\label{sect:xmultiplesegment}

There are two cases of a crossing being generated that involve more than one segment of $Q$ and it
is these cases we address now.  Suppose that $p_i=q_j$.  Then there is
an intersection between $P$ and $Q$ at this point, and we must detect if a
change in sidedness accompanies this intersection.  Assume initially
that $P$ does not contain any consecutively collinear segments; we will
consider the other case later.  

We begin with the non-degenerate case where $(p_{i-1},p_{i+1},q_{i-1},q_{i+1})$ 
are all distinct points. Each of the
points $q_{j-1}$ and $q_{j+1}$ can be in one of four locations: in the cone
left of $(p_{i-1},p_i,p_{i+1})$; in the cone right of
$(p_{i-1},p_i,p_{i+1})$; on the ray defined by $S_{i,i-1}$; or on the ray
defined by $S_{i,i+1}$.  These are labelled in Figure~\ref{fig:locations} as 
regions $I$ through $IV$ respectively.  In Cases $III$ and $IV$ it
may also be necessary to consider the location of $q_{i-1}$ or
$q_{j+1}$ with respect to $S_{i-2,i-1}$ or $S_{i+1,i+2}$.

Within the degenerate ``case'' where the points may not be unique: if  $p_i = q_j$ and $p_{i+1} \neq q_{j+1}$, then any change in sidedness is handled at $p_i$ and can be detected by verifying the
previous side from the polyline.  If, however, $p_{i+1} = q_{j+1}$, then any change in
sidedness will be handled further along in the simplification.
	
\begin{figure}
\includegraphics{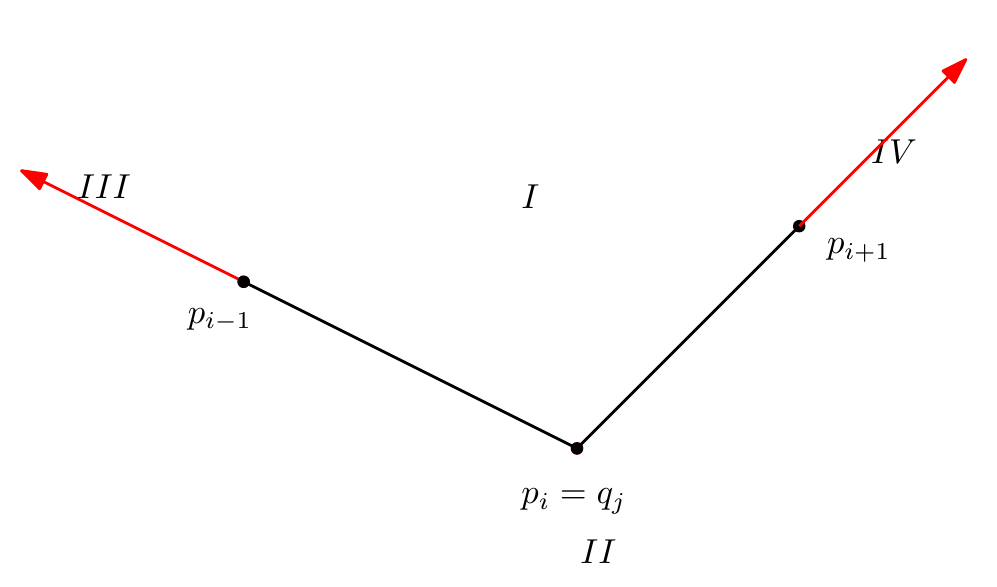}
\caption{Regions around $p_j$ that determine a crossing at $p_j$}
\label{fig:locations}
\end{figure}

By examining these points it is possible to assign a sidedness to
the end of $S_{\pi(j-1),\pi(j)}$ and the beginning of
$S_{\pi(j),\pi(j+1)}$.  Note that the sidedness of a point $q_{j-1}$ with
respect to $S_{i-2,i-1}$ can be inferred from the sidedness of $p_{i-2}$ with
respect to $S_{\pi(j-1),i-1}$, and that property is used in the case of
regions $III$ and $IV$. 
The assumed lack of consecutive collinear segments requires that
$\{p_{i-2},p_{i+2}\}\in I \cup II$ and thus Table~\ref{tab:leftright} is a
complete list of the possible cases when $|P|\geq 5$.
For cases involving $III$ or $IV$ where $i\notin[3,n-2]$ then the case
is labelled collinear (we discuss the consequences of this choice later).

\begin{table}
\begin{center}
\begin{tabular}{|c|c|c|}
\hline
Entity & Categorization & Conditions \\
\hline
\multirow{8}{*}{End of $S_{\pi(j-1),i}$} & \multirow{2}{*}{collinear (1)} & $q_{j-1} = p_{i-1}$ \\
&& $q_{j+1} = p_{i+1}$ \\ 
\cline{2-3}
& \multirow{3}{*}{left (2)} & $q_{j-1}\in I$ \\
& &  $(q_{j-1}\in III) \wedge (p_{i-2}\in II)$ \\
& &  $(q_{j-1}\in IV) \wedge (p_{i+2}\in II)$ \\
\cline{2-3}
& \multirow{3}{*}{right (3)} & $q_{j-1}\in II$ \\
& &  $(q_{j-1}\in III) \wedge (p_{i-2}\in I)$ \\
& &  $(q_{j-1}\in IV) \wedge (p_{i+2}\in I)$\\
\hline
\multirow{8}{*}{Beginning of $S_{i,\pi(j+1)}$} & \multirow{2}{*}{collinear (1)} & $q_{j-1} = p_{i-1}$ \\
&& $q_{j+1} = p_{i+1}$ \\ 
\cline{2-3}
& \multirow{3}{*}{left (2)} & $q_{j+1}\in I$ \\
& &  $(q_{j+1}\in III) \wedge (p_{i-2}\in II)$ \\
& &  $(q_{j+1}\in IV) \wedge (p_{i+2}\in II)$ \\
\cline{2-3}
& \multirow{3}{*}{right (3)} & $q_{j+1}\in II$ \\
& &  $(q_{j+1}\in III) \wedge (p_{i-2}\in I)$ \\
& &  $(q_{j+1}\in IV) \wedge (p_{i+2}\in I)$\\
\hline
\end{tabular}
\caption{Left, right, and collinear labels applied to beginning or end of a segment at $p_j$}
\label{tab:leftright}
\end{center}
\end{table}

A single crossing occurs if and only if the end of $S_{\pi(j-1),i}$ is on
the left or right when the beginning of $S_{i,\pi(j+1)}$ is the opposite. 
Furthermore, the end of any simplification $Q_{1,j}$ of $P_{1,i}$ that ends
in $S_{\pi(j-1),i}$ is labelled left or right in the same way that the end
of $S_{\pi(j-1),i}$ is labelled.  This labelling is consistent with the
statement that the simplification last approached the polyline $P$ from the
side indicated by the labelling.  To maintain this invariant in the
labelling of the end of polylines, if $S_{\pi(j-1),i}$ is labelled as
collinear then the simplification $Q_{1,j}$ needs to have the same labelling
as $Q_{1,j-1}$.  As a basis case, the simplifications of $P_{1,2}$ and
$P_{1,1}$ are the result of the identity operation so they must be
collinear.  Note that a simplification labelled collinear
has no crossings.

The constant number of cases in Table~\ref{tab:leftright}, and the
constant complexity of the sidedness test, imply that we can compute the
number of crossings between a segment and a chain, and therefore the
labelling for the segment, in constant time.
Let $\eta(Q_{1,j},S_{\pi(j-1),i})$ represent the number of extra crossings
(necessarily $0$ or $1$) introduced at $p_j$ by joining $Q_{1,j}$ and
$S_{\pi(j-1),i}$.  We have $\chi(Q_{1,j},P_{1,i}) =
\chi(Q_{1,j-1},P_{1,\pi(j-1)}) + \chi(S_{\pi(j-1),i},P_{\pi(j-1),i}) +
\eta(Q_{1,j},S_{\pi(j-1),i})$, which highlights possibility of computing
the optimal simplification incrementally in a dynamic programming
algorithm.

It remains to consider the case of sequential collinear segments.  The
polyline $P'$ can be simplified into $P$ by merging sequential collinear
segments, effectively removing points of $P'$ without changing its shape. 
When joining two segments where $p'_i=q_j$, $p'_{i-1}$ and $p'{i+1}$ define
the regions as before but there is no longer a guarantee regarding
non-collinearity of $p'_{i-2}$ or $p'_{i+2}$ with respect to the other
points.  The points $q_{j-1}$ and $q_{j-2}$ are now collinear if and only if
either of them are entirely collinear to the relevant segments of $P$.  Our
check for equality is changed to a check for equality or collinearity.  We
examine the previous and next points of $P'$ that are not collinear to the
two segments $[p'_{i-1},p'_i]$ and $[p'_{i},p'_{i+1}]$.  We find such points
for every $p'_i$ in a preprocessing step requiring linear time and space, by
scanning the polyline for turns and keeping two queues of previous and
current collinear points.

\section{Optimal Crossing Measure Simplification}
\label{sect:method}

In this section we describe our dynamic programming approach to computing a
polyline $Q$ that is a subset of $P$ having minimum size $k$ conditional on
maximal crossing measure $\chi(Q,P)$.  We compute $\chi(S_{i,j},P_{i,j})$ in
batches, as described in the previous section.  Our algorithm will maintain
the best known simplifications of $P_{1,i}$ for all $i\in[1,n]$ and each of
the three possible labellings of the ends.  We refer to these paths as
$\mathcal(Q)_{\sigma,i}$ where $\sigma$ describes the labelling at $i$:
$\sigma=1$ for collinear, $\sigma=2$ for left, or $\sigma=3$ for right.

To reduce the space complexity we do not explicitly maintain the
(potentially exponential-size) set of
all simplifications $\mathcal{Q}_{\sigma,i}$.  Instead, for each
simplification corresponding to $(\sigma,i)$ we maintain:
$\chi(\mathcal{Q}_{\sigma,i},P_{1,i})$ (initially zero); the size of the
simplification found $|\mathcal{Q}_{\sigma,i}|$ (initially $n+1$); the
starting index of the last segment added $\beta_{\sigma,i}$ (initially
zero); and the end labelling of the best simplification that the last
segment was connected to $\tau_{\sigma,i}$ (initially zero).  The
initial values described represent the fact that no simplification is yet
known.  The algorithm begins by setting the values for the optimal identity
simplification for $P_{1,1}$ to the following values (note $\sigma=1$).

\begin{eqnarray*}
\chi(\mathcal{Q}_{1,1},P_{1,1}) = 0 \\
|\mathcal{Q}_{1,1}| = 1 \\
\beta_{1,1} = 1\\
\tau_{1,1} = 1
\end{eqnarray*} 

A total of $n-1$ iterations are performed one for each $i\in[1,n-1]$ where a 
batch of segments $S_{i,j} : i<j\leq n$ is each considered in a possible 
simplification ending in that segment.  Each iteration begins with the 
set of simplifications $\{\forall \sigma, \mathcal{Q}_{\sigma,\ell} : \ell \leq i\}$ being optimal,
with maximal values of $\chi(\mathcal{Q}_{\sigma,\ell},P_{1,\ell})$ and
minimum size $|\mathcal{Q}_{\sigma,\ell}|$ for each of the specified
$\sigma$ and $\ell$ combinations.  The iteration proceeds to calculate the
crossing numbers of all segments starting at $i$ and ending at a later
index $\{\chi(S_{i,j},P_{i,j}) | j\in (i,n], \}$, using the
method from Section~\ref{sect:measure}.  For each of the segments $S_{i,j}$
we compute the sidedness of both the end at $j$ ($\sigma'_j$) and the start at
$i$ ($\upsilon'_j$).  Using $\upsilon'_j$ and all values of $\{\sigma :
\beta_{\sigma,i} \geq 0\}$ it is possible to compute
$\eta(\mathcal{Q}_{\sigma,i},S_{i,j})$ using just the labellings of the two
inputs (see Table~\ref{tab:etapsi}).  It is also possible to determine
the labelling of the end of the concatenated polyline
$\psi(\sigma,\sigma'_j)$ using the labelling of the end of the previous
polyline $\sigma$ and the end of the additional segment $\sigma'_j$ (also
shown in Table~\ref{tab:etapsi}).
 
 \begin{table}[ht]
	\begin{minipage}[b]{0.45\linewidth}\centering
	\begin{tabular}{|c|ccc|}
	\hline  \multirow{2}{*}{$\eta(\beta_{\sigma,i},\upsilon'_j)$} & \multicolumn{3}{|c|}{$\beta_{\sigma,i}$} \\
	   & 1 & 2 & 3  \\  
	\hline
	$\upsilon'_j$ & & &  \\
	1 & 0 & 0 & 0  \\
	2  & 0 & 0 & 1  \\
	3  & 0 & 1 & 0  \\
	\hline
	\end{tabular}
	\end{minipage}
	\begin{minipage}[b]{0.45\linewidth}
	\centering
	\begin{tabular}{|c|ccc|}
	\hline  \multirow{2}{*}{$\psi(\sigma,\sigma'_j)$} & \multicolumn{3}{|c|}{$\sigma$} \\
	   & 1 & 2 & 3  \\  
	\hline
	$\sigma'_j$ & & & \ \\
	1  & 1 & 2 & 3  \\
	2  & 2 & 2 & 2  \\
	3  & 3 & 3 & 3  \\
	\hline
	\end{tabular}
	\end{minipage}
	\caption{Tables defining the computation of additional crossings $\eta(\beta_{\sigma,i},\upsilon'_j)$ due to concatenation and the end labelling of the concatenated polyline $\psi(\sigma,\sigma'_j)$}
	\label{tab:etapsi}
	\end{table}
 
With these values computed, the current value of
$\chi(\mathcal{Q}_{\psi(\sigma,\sigma'_j)})$ is compared to
$\chi(\mathcal{Q}_{\sigma,i}) + \chi(S_{i,j},P_{i,j}) +
\eta(\beta_{\sigma,i},\upsilon'_j)$ and if the new simplification has a
greater or equal number of crossings crossings then we can compute:
 
\begin{eqnarray*}
\chi(\mathcal{Q}_{\psi(\sigma,\sigma'_j),j},P_{1,j}) &=&  \chi(\mathcal{Q}_{\sigma,i}) + \chi(S_{i,j},P_{i,j}) + \eta(\beta_{\sigma,i},\upsilon'_j)\\
|\mathcal{Q}_{\psi(\sigma,\sigma'_j),j}| &=& |\mathcal{Q}_{\sigma,i}| + 1 \\
\beta_{\psi(\sigma,\sigma'_j),j} &=& i\\
\tau_{\psi(\sigma,\sigma'_j),j} &=& \sigma
\end{eqnarray*}

Correctness of this algorithm follows from the fact that each possible
segment ending at $i+1$ is considered before the $(i+1)$-st iteration.  For
each segment and each labelling, at least one optimal polyline with that
labelling and leading to
the beginning of that segment must have been considered, by the inductive
assumption.  Since the number of crossings in a polyline
only depends on the crossings
within the segments and the labellings where the segments meet,
the inductive hyopthesis is maintained through the $(i+1)$-st iteration.
It was also trivially true in the basis case $i=1$.
With the exception of computing the crossing number for all of the segments,
the algorithm requires $O(n)$ time and space to update the remaining
information in each iteration.  The final post-processing step is to determine
$\sigma_{max} = \arg\max_{\sigma} \chi(\mathcal{Q}_{\sigma,n})$,
finding the simplification that has the best crossing number.
We use the $\beta$ and $\tau$ information to reconstruct
$\mathcal{Q}_{\sigma_{max},n}$ in $O(k)$ remaining time.

The algorithm requires $O(n)$ space in each iteration and $O(n\log^2 n)$
time per iteration to compute crossings of each batch of segments dominates
the remaining time per iteration.  Thus for simple polylines
$\mathcal{Q}_{\sigma_{max},n}$ is computable in $O(n^2\log^2 n)$ time and
$O(n)$ space and for monotonic polylines it is computable in $O(n^2\log n)$
time and $O(n)$ space.

\section{Results and Smooth Shape Approximation}
\label{sect:monotonicresults}

Our goal was to approximate shape (and noise) in a parameterless fashion.  In
this section we present results of applying the simplification to monotonic
data with and without noise.  We then describe how we perform a
parameterless \emph{smooth boostrap-like} operation, and give results for the
median and the 95\% confidence intervals (i.e.,~error approximations).  We
conclude by showing the results applied to a spectrum acquired from a
Fourier transformed infrared microscope.

Our first point set is given by $p=(x,x^2+10\cdot \sin x)$ for 101 equally
spaced points $x\in[-10,10]$.  The maximal-crossing simplification for this
point set has 5 points and 7 crossings.  We generated a second point set by
adding standard normal noise generated in \textsc{Matlab} with
\textit{randn} to the first point set.  The maximal-crossing simplification
of the data with standard normal noise has 19 points and a crossing number
of 54.  We generated a third point set from the first by adding
heavy-tailed noise consisting of standard normal noise for 91 data points
and standard normal noise multiplied by ten for the remaining ten points. 
The maximal-crossing simplification of the signal contaminated by
heavy-tailed noise has 20 points and a crossing number of 50.  These results
are shown in Figure~\ref{fig:cleanresults}.
	
\begin{figure}
\subfloat{
	\includegraphics[scale=0.32]{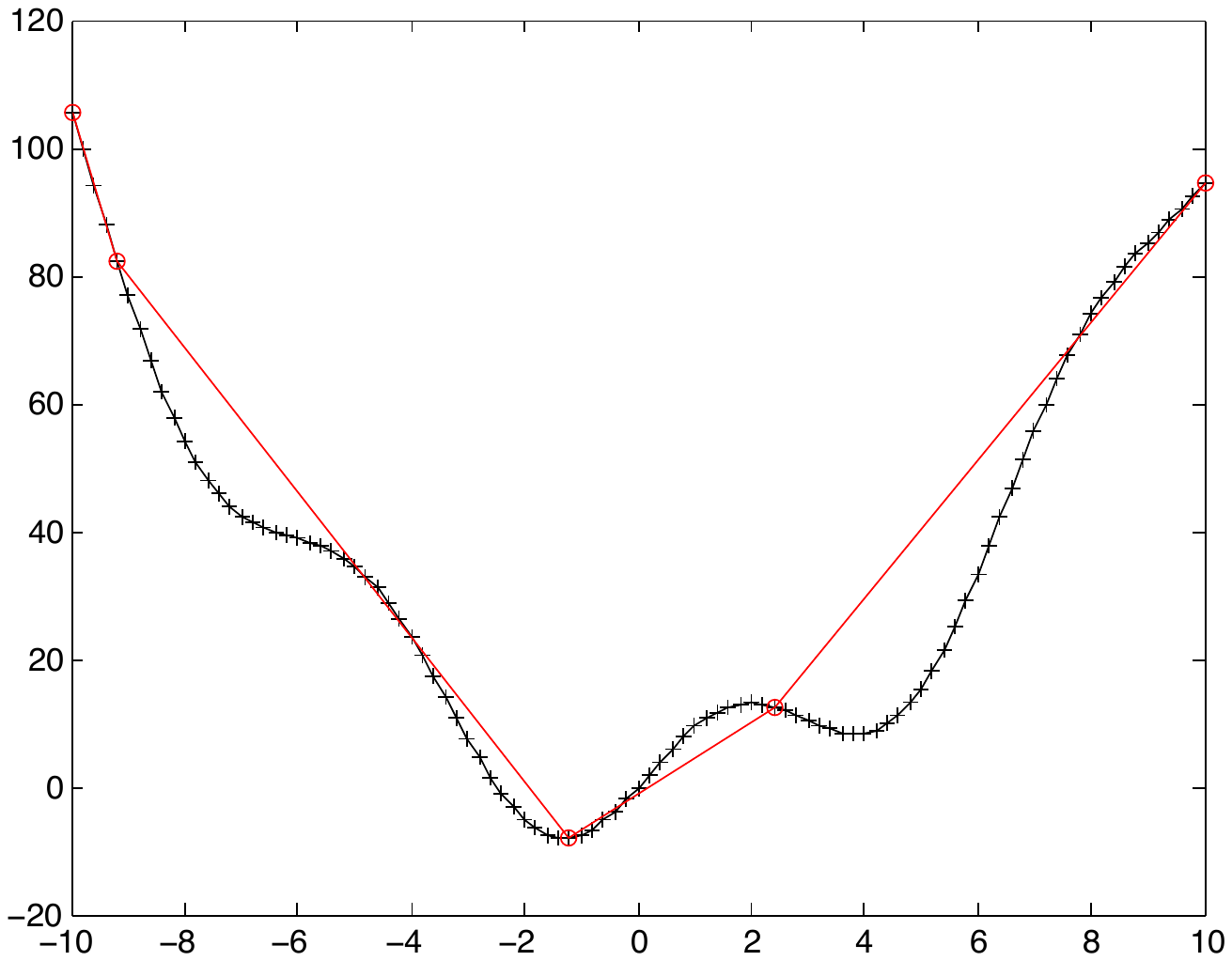}
}
\qquad
\subfloat{
	\includegraphics[scale=0.32]{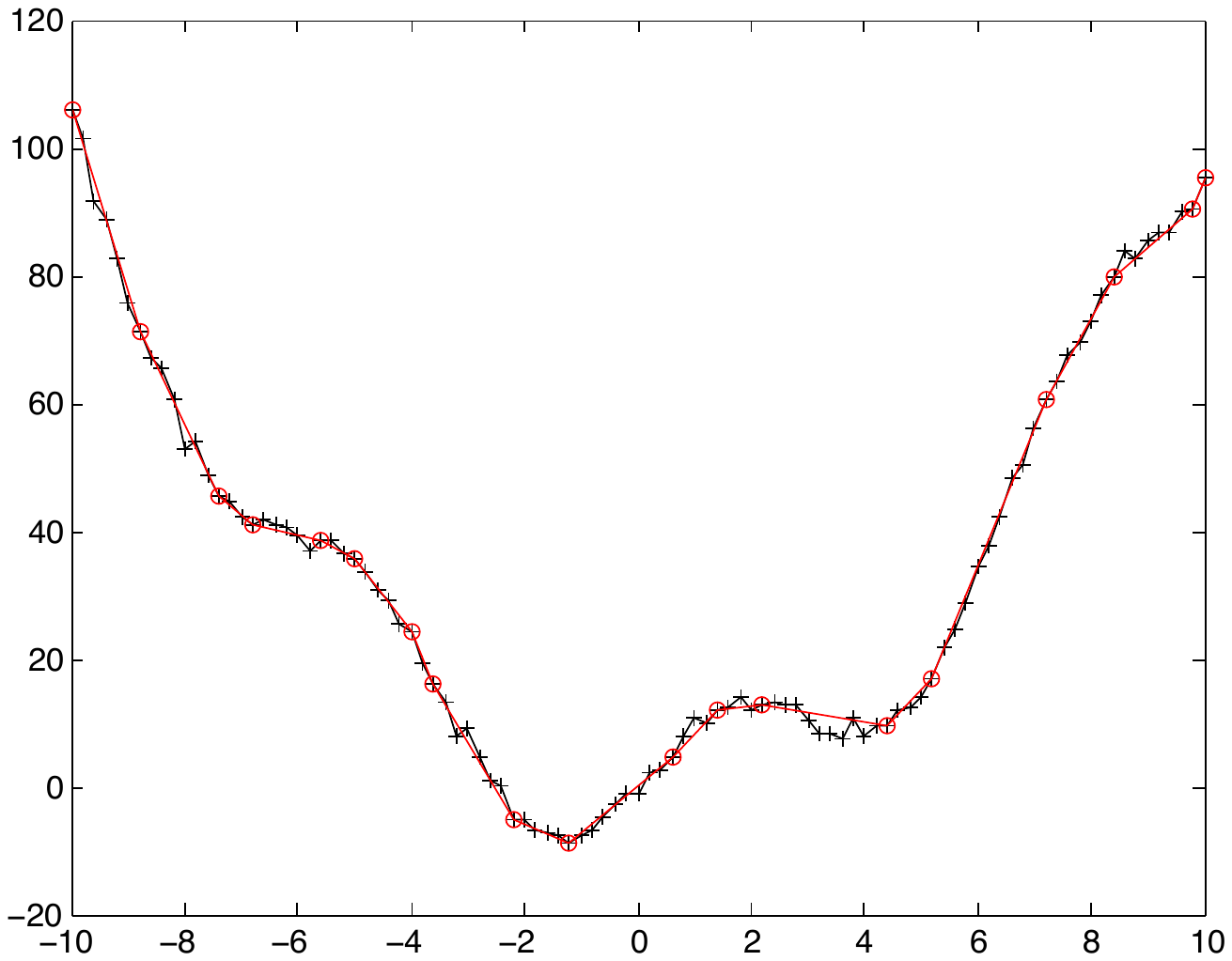}
}
\qquad
\subfloat{
	\includegraphics[scale=0.32]{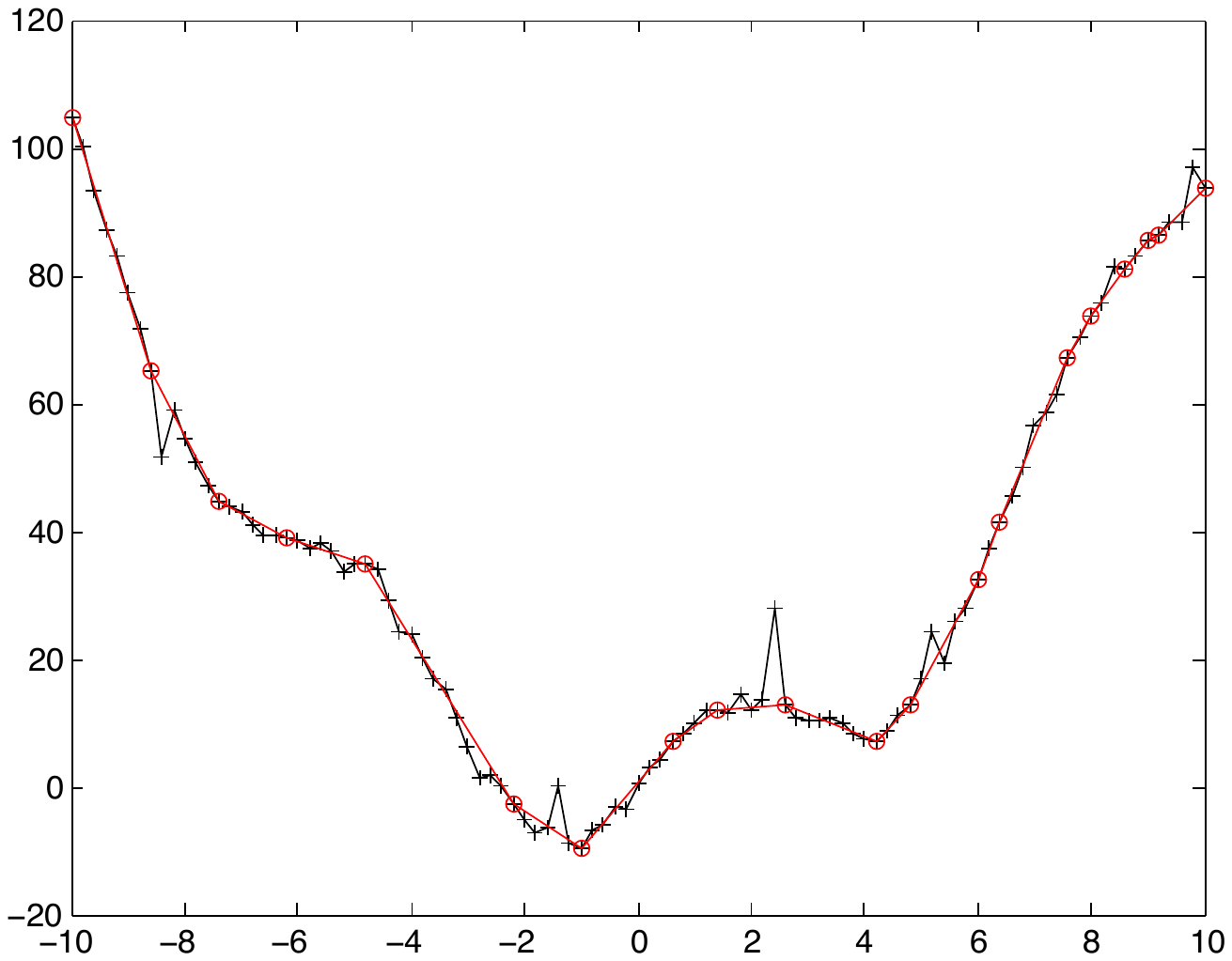}
}
\caption{The optimal crossing path for $p=(x,x^2+10\cdot \sin x)$, without
noise, with standard normal noise and with heavy-tailed (mixed gaussian)
noise.}
\label{fig:cleanresults}
\end{figure}

As can be seen in Figure~\ref{fig:cleanresults}, the crossing-maximization
procedure gives a much closer approximation to the signal when there is some
nonzero amount of noise present to provide opportunities for crossings.  We
might expect that in the case of a clean signal, we could obtain a more
useful approximation by artificially adding some noise before computing our
maximal-crossing polyline.  However, to do so requires choosing an
appropriate distribution for the added noise, and we wish to keep our
procedure parameterless.

The residuals between the data and the optimal crossing approximation form a good
first approximation of the noise within the data.  If these residuals are not
zero-centered then their median should be subtracted to provide a
zero-centered distribution.  We can take the original data points and
subtract, at each point, a value selected uniformly at random with
replacement from the zero-centered residuals.  Then by finding the
maximal-crossing polyline of the resulting modified data, we obtain a
\emph{noise-based approximation}.  We repeat this procedure for different random
selections of which residuals to apply to which data points.  This
\emph{smooth bootstrap-like} approach is similar to ``smooth bootstrap''
estimation, which normally would use a parameter-based model of the error
\cite{wilcox:2010fk}.  Our procedure is parameterless.  Repeated evaluation
of noise based approximations produce multiple $y$ values for all $x$ values. 
Using the median of these and finding the 5th and 95th percentiles results
in an approximation of signal and noise after relatively few iterations.  Results
from 90 iterations of this calculation applied to a Fourier transformed
infrared spectrum are shown in Figure~\ref{fig:ftir}.

\begin{figure}
\subfloat{
		\includegraphics[scale=0.5]{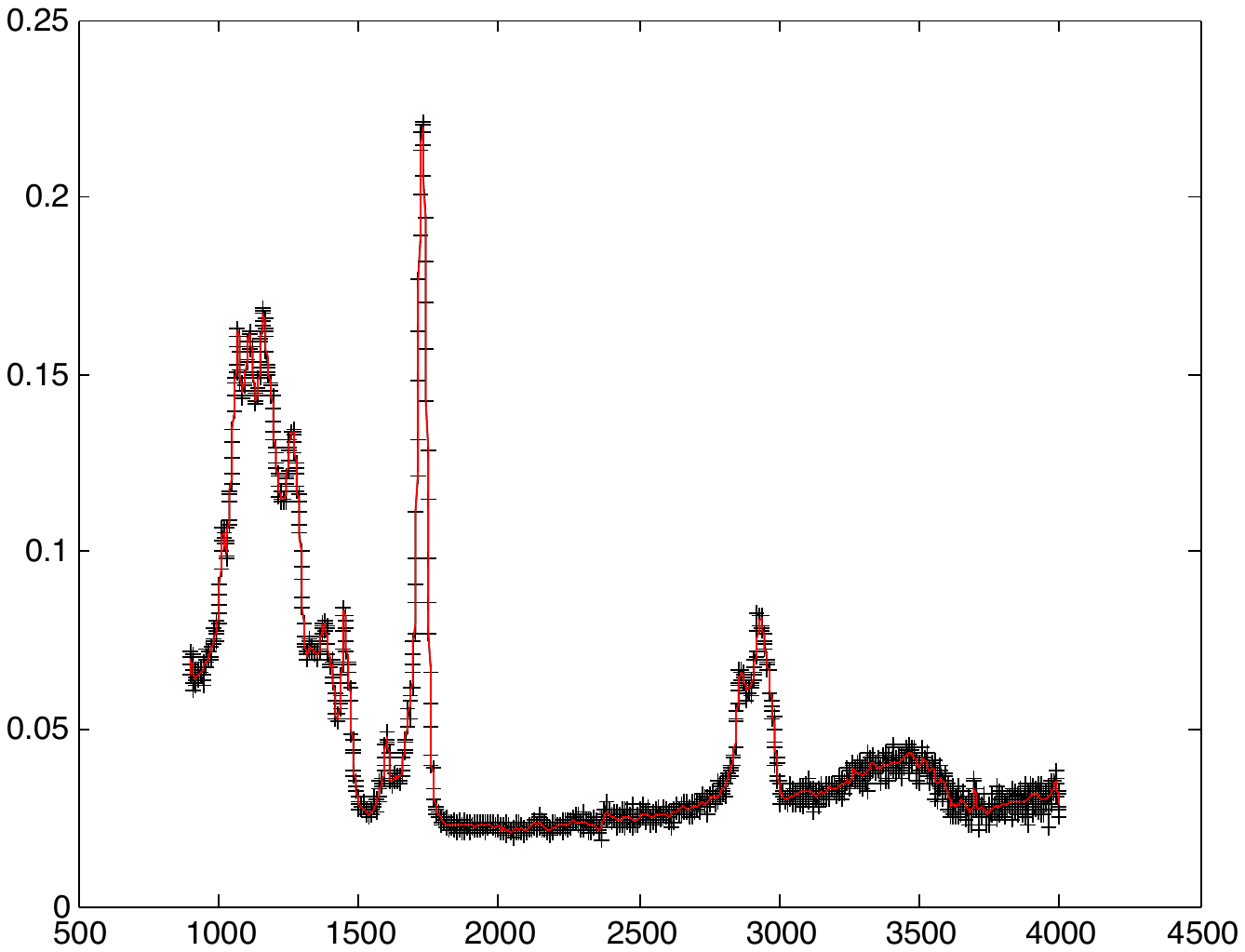}
	}
	\qquad
	\subfloat{
		\includegraphics[scale=0.5]{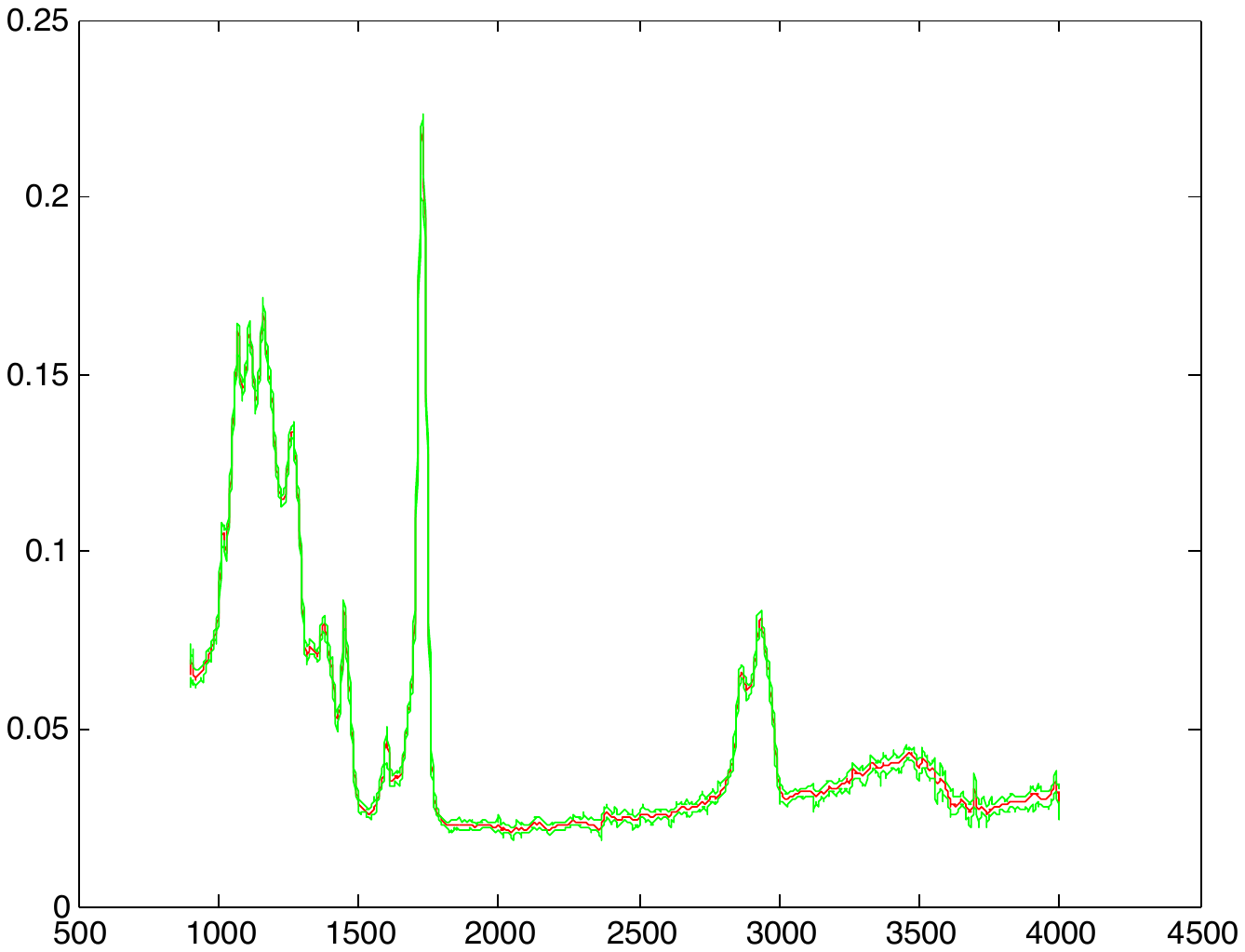}
	}
\caption{The data is in black (with crosses) and contains 1610 data points), the median bootstrap approximation is in red and the green lines are the bootstrapped approximations of the 5 and 95 percentiles.}
\label{fig:ftir}
\end{figure}

\section{Discussion and Conclusions}
\label{sect:discconc}

The optimal crossing measure simplification is robust to small changes
of $x$- or $y$-coordinates of any $p_i$ when the points are in general
position. This robustness can be seen by considering that the crossing
number of every simplification depends on the arrangement of lines
induced by the line segments, and any point in general position (by
definition) can be moved some $\epsilon$ without affecting the
combinatorial structure of the arrangement. The simplification is also
invariant under affine transformations because these too do not modify
the combinatorial structure of the arrangement. In the case of
$x$-monotonic polylines, the simplification possesses another useful
property: the more a point is an outlier, the less likely it is to be
included in the simplification. In the limit, increasing the
$y$-coordinate of any point $p_i$ to infinity ($x$-monotonicity
remains unchanged) will remove $p_i$ from the simplification. That is,
if $p_i$ is initially included in the simplification, then once $p_i$
moves sufficiently upward, the two segments of the simplification
adjacent to $p_i$ cease to cross any input segments in $P$.

We discuss an additional improvement achievable by bounding sequence lengths. 
If a parameter $m$ is chosen in advance such that we require that the
longest segment considered can span at most $m-2$ vertices, then with the
appropriate changes the algorithm can find the minimum sized simplification
conditional on maximum crossing number and having a longest segment of
length at most $m$ in $O(nm\log^2m)$ time for simple polylines or $O(nm\log m)$
time for monotonic polylines, both with linear space.  Since long line
segments tend to be rare in good simplifications, we can set $m$ to a relatively
small value and still obtain good simplification results while significantly
improving speed.

Finally, this research presents a method to approximate the shape and noise
without an implied model for the data nor a need for parameters. 
Application of the shape and noise approximation in the monotonic (functional)
data case has shown promising results when used in conjunction with the
bootstrap method described here.
	
\bibliographystyle{plain}
\bibliography{DataSimplification}

\end{document}